\def\eqref#1{equation~\ref{#1}}
\def\1{\bm{1}}
\def\rmC{{\mathbf{C}}}
\def\rmD{{\mathbf{D}}}
\def\rmI{{\mathbf{I}}}
\def\rmK{{\mathbf{K}}}
\def\rmP{{\mathbf{P}}}
\def\rmX{{\mathbf{X}}}
\def\rmY{{\mathbf{Y}}}
\def\va{{\bm{a}}}
\def\vb{{\bm{b}}}
\def\vp{{\bm{p}}}
\def\vu{{\bm{u}}}
\def\vv{{\bm{v}}}
\def\vx{{\bm{x}}}
\def\vy{{\bm{y}}}
\DeclareMathAlphabet{\mathsfit}{\encodingdefault}{\sfdefault}{m}{sl}
\SetMathAlphabet{\mathsfit}{bold}{\encodingdefault}{\sfdefault}{bx}{n}
\def\sR{{\mathbb{R}}}
\newtheorem{proposition}{Proposition}
\title{Wasserstein total variation filtering}
\name{Erdem Varol, Amin Nejatbakhsh\thanks{Funding sources: NSF NeuroNex Award DBI-1707398, ~The Gatsby Charitable Foundation.}}
\address{Columbia University, Department of Statistics\\
    Zuckerman Institute, Center for Theoretical Neuroscience}
\begin{document}
%\ninept
%
\maketitle
\begin{abstract}
In this paper, we expand upon the theory of trend filtering by introducing the use of the Wasserstein metric as a means to control the amount of spatiotemporal variation in filtered time series data. While trend filtering utilizes regularization to produce signal estimates that are piecewise linear, in the case of $\ell_1$ regularization, or temporally smooth, in the case of $\ell_2$ regularization, it ignores the topology of the spatial distribution of signal. By incorporating the information about the underlying metric space of the pixel layout, the Wasserstein metric is an attractive choice as a regularizer to undercover spatiotemporal trends in time series data. We introduce a globally optimal algorithm for efficiently estimating the filtered signal under a Wasserstein finite differences operator. The efficacy of the proposed algorithm in preserving spatiotemporal trends in time series video is demonstrated in both simulated and fluorescent microscopy videos of the nematode \textit{caenorhabditis elegans} and compared against standard trend filtering algorithms.
\end{abstract}
\begin{keywords}
optimal transport, total variation, motion filtering, wasserstein distance, trend filtering
\end{keywords}
\section{Introduction}\label{sec:intro}
Trend filtering aims to estimate the underlying signal in noisy time series data by decomposing the time series into a smooth component plus a randomly varying noise component. Different types of trend filters that have been proposed\cite{whittaker1922new,hodrick1997postwar,tibshirani2014adaptive,kim2009ell_1,rudin1992nonlinear} in the past have found uses in many applications in the fields of economics~\cite{hodrick1997postwar}, financial time series~\cite{tsay2014financial}, geophysics~\cite{bloomfield1992climate}, and biology~\cite{link1998estimating}, to name a few.

In the context of spatiotemporal data such as video time series, standard trend filtering cannot utilize the spatial contiguity of objects in view and essentially treats the video as a collection of independent pixelwise time series. Therefore, whilst able to handle temporal noise in terms of random jitter, traditional trend filtering methods do not model noise in the context of spatial movement jitter.

The recent popularity of the Wasserstein metric, also known as the earth mover's distance (EMD), as an alternative to the euclidean metric in the context of regularizer of loss functions in image processing allows an intuitive and efficient means of handling spatial motion in videos~\cite{kantorovich2006translocation,villani2008optimal,sinkhorn1967diagonal,ryu2018vector,peyre2019computational,cuturi2013sinkhorn,chen2018optimal,alvarez2017structured,arjovsky2017wasserstein}.

In this work, we propose to modify the traditional trend filtering frameworks to utilize a Wasserstein metric regularizer as a way to filter out not only temporal jitter but also noise in the form of excessive motion. The ability of the Wasserstein metric to naturalistically model physical properties of objects in terms of mass preservation and motion modelling have been demonstrated widely in~\cite{peyre2019computational}. We make use of the recently introduced entropy regularized Wasserstein metric\cite{cuturi2013sinkhorn,altschuler2017near} to form a strongly convex objective function that can be optimized to its global optimal efficiently with the proposed algorithm. We demonstrate the efficiency of the proposed filtering algorithm to denoise the movement of simulated spatiotemporal data. Furthermore, we showcase the use of the proposed method in a computational neuroscience context by denoising a fluorescent microscopy video of the neurons of the nematode \textit{caenorhabditis elegans} (\textit{c. elegans}).
\vspace{-4ex}
\subsection{Paper organization}
In sections~\ref{sec:l2tf} and~\ref{sec:l1tf}, we revisit various formulations of trend filtering to motivate the formulation for the proposed method. In section~\ref{sec:wtv} we introduce the proposed formulation and provide a globally convergent algorithm for its optimization. Section~\ref{sec:results} demonstates the filtering efficacy of the proposed method in comparison to trend filtering methods. In section~\ref{sec:discussion}, we discuss generalizations of the proposed method.
\section{Methods}\label{sec:method}
First we introduce notation. Let $\rmX \in \sR^{d\times T}$ denote the $d$-dimensional time series signal in $T$ temporal segments. The signal $\rmX$ can be denoted both in its temporal columns and its spatial rows as $\rmX = [\rmX_1 | \ldots | \rmX_T] = [\vx_1^T | \ldots | \vx_d^T]^T$. Likewise, let $\rmY\in \sR^{d\times T}$ denote the filtered estimator that is to be optimized. Likewise, we can also denote $\rmY$ by its columns or rows as $\rmY = [\rmY_1 | \ldots | \rmY_T] = [\vy_1^T | \ldots | \vy_d^T]^T$.

\subsection{Hodrick–Prescott filtering ($\ell_2$ trend filtering)}\label{sec:l2tf}
The trend filtering introduced by Hodrick and Prescott in~\cite{hodrick1997postwar} involves optimizing for an estimator $\rmY$ such that the estimator at time $t$ is close in euclidean metric to the average of the estimators at times $t+1$ and $t-1$ whilst respecting data fidelity:

\begin{align}\label{eq:l2tf}
    \underset{\rmY}{\min}\frac{1}{2}\sum_{t=1}^T \|\rmX_t - \rmY_t\|_2^2 + \lambda \sum_{t=1}^{T-1} \|\rmY_{t-1} -2\rmY_t + \rmY_{t+1}\|_2^2
\end{align}
If we denote $\rmD^{(1)} \in \sR^{T \times T}$ as the first order finite difference operator of the form
\begin{align*}
    \rmD^{(1)} = \left[\begin{matrix} 1 & -1 & & &  \\
    & 1 & - & &  &\\
    & &  & \ddots & &\\
      & & & 1 & -1& \\
    \end{matrix}\right]
\end{align*}
we can obtain $k$th order finite differences by composing the operator $\rmD^{(1)}$ with itself $k$ times: $\rmD^{(k)} = \rmD^{(1)}\rmD^{(k-1)}$. Given this, the regularizer in the Hodrick-Prescott filter can expressed as the squared euclidean norm of the product of the second order finite difference operator $\rmD^{(2)}$ with the rows of the estimator $\rmY$. This can be used to decompose $\ell_2$-trend filtering into separable problems along the rows of the signal matrix as
\begin{align}
    \underset{\vy_i}{\min}\frac{1}{2}\|\vx_i -\vy_i\|_2^2 + \lambda \|\rmD^{(2)}\vy_i\|_2^2 
\end{align}
which yields a closed form solution:
\begin{align}
    \vy^{(\ell_2)}_i = (\rmI + 2\lambda {\rmD^{(2)}}^T\rmD^{(2)})^{-1}\vx_i 
\end{align}
\subsection{Kim et al. $\ell_1$ trend filtering}\label{sec:l1tf}
In contrast with the Hodrick–Prescott filter, $\ell_1$ trend filter was introduced in \cite{kim2009ell_1} and expanded in \cite{tibshirani2014adaptive}. Instead of penalizing the squared euclidean distance of finite differences of the estimators $\rmY$, this formulation penalizes absolute differences:
\begin{align}\label{eq:l1tf}
    \underset{\rmY}{\min}\frac{1}{2}\sum_{t=1}^T \|\rmX_t - \rmY_t\|_2^2 + \lambda \sum_{t=1}^{T-1} \|\rmY_{t-1} -2\rmY_t + \rmY_{t+1}\|_1
\end{align}
which can similarly be decomposed into separable problems along the rows of the signal matrix as
\begin{align}
    \underset{\vy_i}{\min}\frac{1}{2}\|\vx_i -\vy_i\|_2^2 + \lambda \|\rmD^{(2)}\vy_i\|_1 
\end{align}
The optimum for this does not admit a closed form solution, however proximal gradient steps can be used to efficiently obtain a solution~\cite{tibshirani2014adaptive}. The intuition behind the $\ell_1$ trend filter is that it yields piecewise linear components which can then be used to estimate state changes in time series data. 

Note that if we utilize the first order finite differences operator in the above formulations, the resulting objective is that of minimizing the total variation denoiser~\cite{rudin1992nonlinear}. However, in all three cases, the prevailing trend is that the series along each spatial dimension is independent that of the other dimensions. In the context of video data, this assumption is clearly violated which motivates the formulation we introduce herein.

\begin{algorithm}[!htb]
\caption{Wasserstein total variation (WTV)}\label{alg:wtv}
\begin{algorithmic}
\State{\textbf{Input: }} Signal: $\rmX \in \sR^{d \times T}$, ground cost: $\rmC \in \sR^{d \times d}$, Wasserstein regularization: $\lambda \geq 0$, entropic regularization: $\gamma > 0$, convergence tolerance $\epsilon>0$, Sinkhorn iterations: $S >0$, step-size $\alpha >0$
\State{\textbf{Require:} Data normalization: $\rmX_t^T\boldsymbol{1} = 1$, $\rmX \geq 0$}
\State{\textbf{Initialization:}} $k=0$,~~$\rmY^{(k)} \leftarrow \rmX$,~~$\rmK = e^{-\lambda \rmC/\gamma}$,\\~~$\lbrace\va_t^{(0)}=\boldsymbol{1}_d,\vb_t^{(0)}=\boldsymbol{1}_d\rbrace_{t=1}^{T-1}$
\While{Not converged}
\State{$k\leftarrow k+1$}
\For{$t=1,\ldots,T-1$}
\State{Set $\vu_t^{(0)} = e^{-\va_t^{(k-1)}}$,~ $\vv_t^{(0)} = e^{-\vb_t^{(k-1)}}$}
\State{$s \leftarrow 0$}
\For{$s=1,\ldots,S$}
\State{$s \leftarrow s+1$}
\State{$\vu_t^{(s)} \leftarrow \rmY_t/\rmK \vv_t^{(s-1)} $ (element wise)}
\State{$\vv_t^{(s)} \leftarrow \rmY_t/\rmK \vu_t^{(s)} $ (element wise)}
\EndFor
\State{$\va_t^{(k)} \leftarrow -\gamma \log \vu_t^{(S)}$}
\State{$\vb_t^{(k)} \leftarrow -\gamma \log \vv_t^{(S)}$}
\State{$\rmP_t^{(k)} \leftarrow \text{diag}(e^{-\va_t^{(k)}/\gamma}) e^{-\lambda\rmC/\gamma}\text{diag}(e^{-\vb_t^{(k)}/\gamma})$}
\State{Projected gradient descent such that $\rmY^{(k)} \geq 0$:}
\State{$\rmY_t^{(k)} \stackrel{\geq 0}{\leftarrow}\rmY_t^{(k-1)}+ \alpha[\rmX_t - \rmY_t^{(k-1)} + \va_t^{(k)} + \vb_{t-1}^{(k)}]$}
\EndFor
\State{Check convergence:  $\|\rmY^{(k)}-\rmY^{(k-1)}\|_2 \stackrel{?}{\le} \epsilon$}
\EndWhile\\
\Return{Filtered signal: $\rmY\in\sR^{d \times T}$}
\end{algorithmic}
\end{algorithm}
\subsection{Wasserstein total variation}\label{sec:wtv}

\begin{figure*}
    \centering
    \includegraphics[width=1\linewidth]{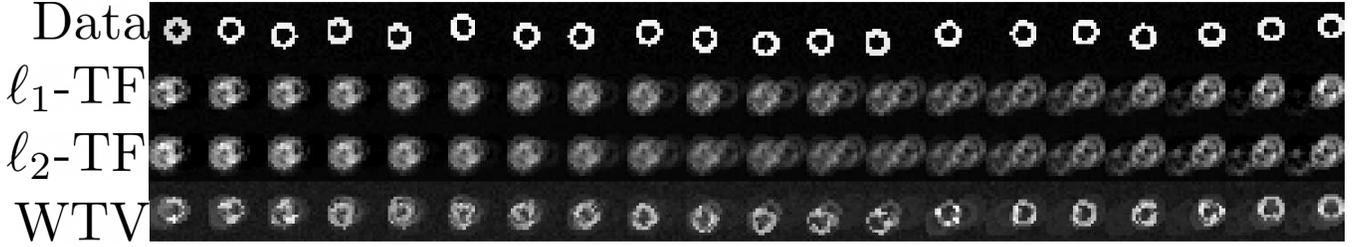}
    \caption{Results of time series filtering on simulated ring data over 20-time frames. Rows denote the different compared methods while columns denote temporal slices. The top row denotes the spatial and temporal variation of the simulated data over the time course. Second and third rows denote the results of $\ell_1$\cite{kim2009ell_1} and $\ell_2$\cite{hodrick1997postwar} trend filtering, respectively. The last row denotes the proposed Wasserstein total variation (WTV) filtering of the time series. Note the temporal averaging that $\ell_1$-TF and $\ell_2$-TF introduces which yields in low contrast filtering. In contrast, WTV allows  the warping of the underlying spatial layout to yield higher contrast time slices.}
    \label{fig:simulated}
\end{figure*}

Similar to trend filtering techniques, we seek to obtain a projection of the video frames to an equal dimension latent space $\rmY \in \sR^{d\times T}$ such that the Wasserstein distance between subsequent latent frames is minimized with a trade-off with respect to data fidelity. In its most rudimentary form, regularizing by first-order Wasserstein finite differences results in the following objective:

\begin{align}\label{wasserstein}
    \underset{\rmY}{\min} \frac{1}{2}\sum_{t=1}^T \|\rmX_t - \rmY_t\|_2^2 + \lambda \sum_{t=1}^T W_1(\rmY_t,\rmY_{t+1})
\end{align}
where the Wasserstein metric $W_1$ can be expressed as the solution to a constrained linear program:
\begin{align}
    &W_1(\rmY_t,\rmY_{t+1}) = \underset{\rmP}{\min}~\sum_{i,j=1}^{d} c_{i,j}P_{i,j}  \nonumber \\
    &\text{subject to}~\sum_{j=1}^d P_{i,j} = \rmY_t,~~\sum_{i=1}^d P_{i,j} = \rmY_{t+1} 
\end{align}

Here $c_{i,j}$ denotes the ground cost of transporting a unit of mass from coordinate $i$ to coordinate $j$. In practice, $c_{i,j}$ can be set to be the euclidean distance between pixel locations $i$ and $j$: $c_{i,j} = \|\vp_i - \vp_j\|_2$ where $\vp_i$ denotes the coordinates of the $i$th pixel.

These two objectives can be expressed jointly as
\begin{align}
    &\underset{\rmY,\rmP}{\min} \frac{1}{2}\sum_{t=1}^T \|\rmX_t - \rmY_t\|_2^2 + \lambda \sum_{t=1}^T \sum_{i,j=1}^{d} c_{i,j,t}P_{i,j,t} \nonumber \\
    &\text{subject to}\sum_{j=1}^d P_{i,j,t} = \rmY_t \nonumber \\
    &\sum_{i=1}^d P_{i,j,t} = \rmY_{t+1} 
\end{align}

This problem is a quadratic program with linear constraints and can be solved using off-the-shelf solvers. Due to the dimensionality of the problem, the complexity of obtaining the Wasserstein metric is in the order $O(d^3)$. Furthermore, Cuturi et al.\cite{cuturi2013sinkhorn} have shown that the minimizer of \eqref{wasserstein} is not unique due to the weak convexity of linear programs. Also, in recent works~\cite{altschuler2017near} it is shown that fixed point Sinkhorn iterations can be used to compute the metric in $O(d^2)$ time with logarithmic scaling with the added benefit of providing strong convexity which guarantees a unique optimal solution. The objective with entropic regularization becomes:

\begin{align}\label{eq:wtv_objective}
   & \underset{\rmY,\rmP}{\min} \frac{1}{2}\sum_{t=1}^T \|\rmX_t - \rmY_t\|_2^2 + \lambda \sum_{t=1}^T \sum_{i,j=1}^{d} c_{i,j,t}P_{i,j,t}  +  \gamma H(\rmP_t) \nonumber\\
   & \text{subject to}\sum_{j=1}^d P_{i,j,t} = \rmY_t \nonumber \\
   &\sum_{i=1}^d P_{i,j,t} = \rmY_{t+1} 
\end{align}

where $H(\rmP_t) = \sum_{i,j}^d P_{i,j,t} \log P_{i,j,t} - 1$ denotes the entropy of the transportation matrix $\rmP_t$.

% The lagrangian for this objective is
% \begin{align}
%   \mathcal{L}(\rmY,\rmP,\va,\vb) &= \frac{1}{2}\sum_{t=1}^T \|\rmX_t - \rmY_t\|_2^2 \nonumber \\
%   &+ \lambda \sum_{t=1}^T \sum_{i,j=1}^{d} c_{i,j,t}P_{i,j,t}  + \gamma H(\rmP_t)\nonumber \\
%   &+ \va_t^T(\rmP_t \boldsymbol{1} - \rmY_t) + \vb_t^T (\rmP_t^T \boldsymbol{1} - \rmY_{t+1})
% \end{align}
Taking gradients of the Lagrangian of \eqref{eq:wtv_objective} with dual variables $\va_t$ and $\vb_t$ yields the following Karesh-Kuhn-Tucker (KKT) conditions:
\begin{align}\label{eq:kkt}
    \frac{\partial \mathcal{L}}{\partial \rmY_t} &= - \rmY_t -\rmX_t - \va_t - \vb_{t-1} = 0\nonumber \\
    \frac{\partial \mathcal{L}}{\partial \rmP_{i,j,t}} &= \lambda c_{i,j,t} +  \gamma\log P_{i,j,t} + \va_{i,t} + \vb_{j,t} = 0 \nonumber \\
    \frac{\partial \mathcal{L}}{\partial \va_t} &= \rmP_t \boldsymbol{1} - \rmY_t =0 \nonumber \\
    \frac{\partial \mathcal{L}}{\partial \vb_t} &= \rmP_t^T \boldsymbol{1} - \rmY_{t+1}=0
\end{align}

\begin{figure*}[!htb]
    \centering
    \includegraphics[width=1\linewidth]{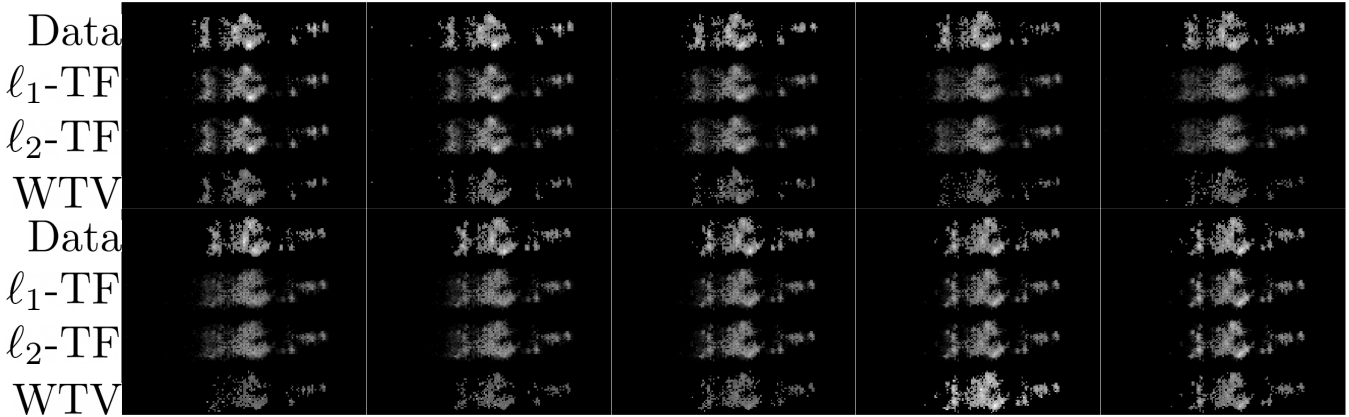}
    \caption{Results of time series filtering on \textit{c.elegans} fluorescence microscopy over 100-time frames (middle 10 frames shown sequentially in two rows). Rows denote the different compared methods while columns denote temporal slices. The top row denotes the spatial and temporal variation of the simulated data over the time course. Second and third rows denote the results of $\ell_1$\cite{kim2009ell_1} and $\ell_2$\cite{hodrick1997postwar} trend filtering, respectively. The last row denotes the proposed Wasserstein total variation (WTV) filtering of the time series. Note that WTV allows the warping of the worm to yield spatiotemporal trends.}
    \label{fig:worm}
\end{figure*}
The optimal $P_{i,j,t}$ can be expressed as
\begin{align}
    \rmP_t^* = \text{diag}(e^{-\va_t/\gamma}) e^{-\lambda\rmC/\gamma}\text{diag}(e^{-\vb_t/\gamma})
\end{align}
which by using the KKT conditions result in the following system:
\begin{align}
    \rmP_t^*\boldsymbol{1} &= \vu_t \odot \rmK \vv_t = \rmY_t \nonumber \\
    {\rmP_t^*}^T\boldsymbol{1} &= \vv_t \odot \rmK \vu_t = \rmY_{t+1} 
\end{align}
where $\odot$ denotes elementwise multiplication. The auxillary dual variables $\vu_t = e^{-\va_t/\gamma} $ and $\vv_t = e^{-\vb_t/\gamma}$ can then be iteratively estimated using the convergent fixed point Sinkhorn-Knopp\cite{sinkhorn1967concerning} iterations
\begin{align}\label{eq:sinkhorn}
    \vu_t \leftarrow \rmY_t/\rmK \vv_t,~~~
    \vv_t \leftarrow \rmY_{t+1}/\rmK \vu_t
\end{align}
Algorithm~\ref{alg:wtv} summarizes the optimization routine outlined above that is used to infer the estimator $\rmY$.

\begin{proposition}
Algorithm~\ref{alg:wtv} converges to a unique global optimum.
\end{proposition}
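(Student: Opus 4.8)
The plan is to eliminate the transport plans $\rmP_t$, reduce \eqref{eq:wtv_objective} to a strongly convex program in $\rmY$ alone, establish existence and uniqueness of its minimizer, and then recognize the outer iteration of Algorithm~\ref{alg:wtv} as a convergent projected gradient method on this reduced objective.

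First I would fix the marginals $\rmY_t,\rmY_{t+1}$ and study the inner entropic transport problem. The map $\rmP_t \mapsto \lambda\langle \rmC,\rmP_t\rangle + \gamma H(\rmP_t)$ is strictly convex because the entropy contributes a positive-definite Hessian, so over the coupling polytope it admits a unique minimizer $\rmP_t^\star$; denote its optimal value $W_\gamma(\rmY_t,\rmY_{t+1})$. Since this value is the optimum of a convex program whose equality constraints depend linearly on the pair $(\rmY_t,\rmY_{t+1})$, it is jointly convex in its marginal arguments. Substituting back, \eqref{eq:wtv_objective} reduces to
\begin{align*}
    F(\rmY) = \frac{1}{2}\sum_{t=1}^{T}\|\rmX_t-\rmY_t\|_2^2 + \lambda\sum_{t=1}^{T-1} W_\gamma(\rmY_t,\rmY_{t+1}),
\end{align*}
a sum of a $1$-strongly convex quadratic and a convex term, hence strongly convex. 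A strongly convex function attains its infimum at a unique point over the convex feasible region cut out by the nonnegativity and normalization constraints, which proves the uniqueness of the global optimum.

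Next I would connect the algorithm to $F$. Entropic regularization makes the dual potentials of the inner problem unique, so by the envelope (Danskin) theorem $W_\gamma$ is differentiable in its marginals with gradients given exactly by those potentials, $\nabla_{\rmY_t} W_\gamma(\rmY_t,\rmY_{t+1}) = \va_t$ and $\nabla_{\rmY_{t+1}} W_\gamma(\rmY_t,\rmY_{t+1}) = \vb_t$. Hence $\nabla_{\rmY_t} F = (\rmY_t-\rmX_t) + \lambda(\va_t+\vb_{t-1})$, and up to the paper's sign convention this is precisely the search direction in the projected update $\rmY_t^{(k)} \stackrel{\ge 0}{\leftarrow} \rmY_t^{(k-1)}+\alpha[\,\cdot\,]$; moreover \eqref{eq:kkt} identifies $\rmY^\star$ as the unique zero of this gradient. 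The inner Sinkhorn--Knopp loop \eqref{eq:sinkhorn} is a convergent fixed-point iteration that recovers $\rmP_t^\star$ and the true potentials, so in the idealized limit $S\to\infty$ the outer loop is exact projected gradient descent on $F$. Because $F$ is strongly convex and, for fixed $\gamma>0$, has a Lipschitz-continuous gradient (entropic smoothing bounds the Hessian of $W_\gamma$), projected gradient descent with any step size $\alpha$ below the reciprocal of the smoothness constant contracts to $\rmY^\star$ at a linear rate, giving global convergence to the unique optimum.

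The step I expect to be the main obstacle is removing the idealization $S\to\infty$. In practice each outer step uses potentials obtained from only $S$ Sinkhorn iterations, so the gradient fed to the projected step is inexact and the scheme is really an \emph{inexact} projected gradient method. A rigorous argument must bound the potential error after $S$ iterations --- which decays geometrically in $S$ by the contraction of the Sinkhorn map in the Hilbert projective metric --- and then invoke stability results for inexact gradient descent to show the accumulated error remains summable and does not overturn the linear contraction established for the exact method. A secondary technical point is making the smoothness constant of $W_\gamma$ explicit, since it grows as $\gamma\to 0$; this is exactly what couples the admissible step size $\alpha$ to the entropic parameter $\gamma$ and must be tracked to certify the choice of $\alpha$ in Algorithm~\ref{alg:wtv}.
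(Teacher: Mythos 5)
Your proposal is correct in outline and is in fact considerably more complete than the argument the paper gives. The paper's proof consists of two assertions: (i) the objective in \eqref{eq:wtv_objective} is strongly convex with affine constraints, hence has a unique minimizer, and (ii) the Sinkhorn iterations in \eqref{eq:sinkhorn} converge to dual variables satisfying the KKT conditions in \eqref{eq:kkt}. You reach the same uniqueness conclusion by a cleaner route (marginalizing out the plans $\rmP_t$ to obtain a strongly convex reduced objective $F(\rmY)$), but your real added value is the third step: invoking the envelope/Danskin theorem to identify the converged Sinkhorn potentials as the gradient of the entropic Wasserstein term, so that the outer update in Algorithm~\ref{alg:wtv} is recognized as (inexact) projected gradient descent on $F$. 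The paper never makes this connection, and without it assertion (ii) says nothing about whether the $\rmY$-iterates converge --- strong convexity of an objective does not by itself imply convergence of a particular alternating scheme. The caveats you flag are genuine and are equally unaddressed by the paper: the finite-$S$ inexactness of the gradient, and the dependence of the admissible step size $\alpha$ on the smoothness constant of $W_\gamma$, which degrades as $\gamma \to 0$ and also near the boundary of the nonnegative orthant, where entries of $\rmY_t$ can vanish and the potentials $\va_t, \vb_t$ can diverge. One could add a further discrepancy you gesture at with ``up to the paper's sign convention'': the stationarity condition in \eqref{eq:kkt} and the update written in Algorithm~\ref{alg:wtv} differ from the true gradient of $F$ in both sign and the factor of $\lambda$ multiplying the potentials, so a fully rigorous version of your argument would first have to correct those. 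In short, your approach subsumes the paper's and supplies the missing convergence argument for the outer loop; what remains is exactly the inexact-gradient analysis you identify.
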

\begin{proof}
This follows from the fact that the objective in \eqref{eq:wtv_objective} is strongly convex due to inclusion of the entropic regularizer term $H(\rmP_t)$ and that the constraint set is affine on the optimized variables. Furthermore, \cite{sinkhorn1967concerning} showed that given sufficient number of iterations, taking the logarithm of the iterates in procedure in~\eqref{eq:sinkhorn} yield the dual variables that satisfy the KKT conditions in ~\eqref{eq:kkt}.
\end{proof}

\section{Experiments}\label{sec:results}
To demonstrate the efficacy of the proposed algorithm on denoising spatiotemporal data, we perform experiments on both simulated data as well as fluorescence microscopy videos of the nematode \textit{c. elegans}. The proposed WTV algorithm is compared with both $\ell_2$~\cite{hodrick1997postwar} and $\ell_1$~\cite{kim2009ell_1,tibshirani2014adaptive} trend filtering. The $\lambda$ parameters of the trend filtering methods were set such that the data fidelity term $\sum_t\|\rmX_t-\rmY_t\|_2^2$ yielded the same loss for both methods. Similarly, for WTV, the entropic regularization term $\gamma$ was set to be 1 and $\lambda$ was set such that the data fidelity loss term equaled that of the trend filters.

\subsection{Simulated data}
The simulated data was generated by rendering a ring shape in a 256 x 256 space and temporally shifting its center using a random walk with a variance of 25 pixels. 20-time frames were generated using this procedure. The data can be visualized in the top row of figure ~\ref{fig:simulated}. The results of the trend filters can be seen in the second and third rows. Lastly, WTV results are seen in the last row of figure~\ref{fig:simulated}. The results show that indeed since trend filtering is not informed by the geometry of the image layout, it yields estimators that are blurry due to temporal averaging. In contrast, WTV yields an estimator that is a warped version of the underlying signal such that motion between frames is smoothed.
\subsection{C. elegans data}
The fluorescence microscopy data of \textit{c.elegans}~\cite{yemini2019neuropal} consisted of 100 time frames of spatial dimensions 116 x 600 x 87 voxels. The images were downsampled by a factor of 4 and background pixels were removed by Otsu's method~\cite{otsu1979threshold}. The results for this experiment are seen in figure~\ref{fig:worm}. Similar to the simulated experiments, the results of WTV differ from those of trend filters by yielding images that are averaged spatially instead of temporally. Note that WTV allows the warping of the worm to yield spatiotemporal trends.

\section{Discussion}\label{sec:discussion}
\subsection{Generalized Wasserstein finite differences}
The signal filtering model we introduce is analogous to performing trend filtering using a first-order finite-difference operator in the Wasserstein metric space. Since first-order finite-difference regularization in the euclidean metric space is also known as total variation regularization\cite{rudin1992nonlinear}, we term the method herein as Wasserstein total variation filtering. Generalizing the method we propose by taking higher-order finite differences in the Wasserstein metric space corresponds to the computation of Wasserstein barycenters\cite{cuturi2014fast}. Hence, the second-order Wasserstein trend filter would consist of the following regularizer: $ W_1(\rmY_t,BW_1(\rmY_{t+1},\rmY_{t-1}))$
where $BW_1(\cdot,\cdot)$ denotes the two input barycenter operator in the Wasserstein metric space such that $W_1(BW(A,B),A) = W_1(BW(A,B),B)$. Deriving efficient algorithms in this context remains an open problem for future work.\\
\subsection{Conclusion}
In this paper, we introduced a non-parametric estimation model for the filtering of spatiotemporal trends in time series data. Utilizing regularization in the Wasserstein metric space in contrast with the euclidean metric space allows the proposed filter to harness spatial correspondences across time frames to yield an estimator that is not only steadfast to the temporal trends in the signal but also respects the underlying motion of the objects in view. The simulated experiments showcase the ability of the filter to warp the data in a way that preserves the underlying spatial movement of the signal. Furthermore, results on fluorescence microscopy videos of \textit{c. elegans} nematodes show that the proposed method can reduce motion artifacts in low framerate videos commonly encountered in computational biology contexts.

% -------------------------------------------------------------------------
\bibliographystyle{IEEEbib}
\bibliography{refs}

\end{document}